\providecommand{\U}[1]{\protect\rule{.1in}{.1in}}
\newtheorem{theorem}{Theorem}
\newtheorem{corollary}[theorem]{Corollary}
\newtheorem{lemma}[theorem]{Lemma}
\newtheorem{proposition}[theorem]{Proposition}
\newenvironment{proof}[1][Proof]{\noindent\textbf{#1.} }{\ \rule{0.5em}{0.5em}}
\begin{document}
\title{Default Contagion with Domino Effect \\
---A First Passage Time
Approach}

\author[$\star$]{Jir\^o Akahori 
}
\author[$\dagger$]{Hai-Ha Pham}

\affil[$\star$]{Department of Mathematical Sciences, Ritsumeikan University, Japan}
\affil[$\dagger$]{Department of Economical Mathematics, Banking University of Hochiminh city, Vietnam}

\date{}
\maketitle
\section{Introduction}

The systematic risk and contagion play a crucial role in the financial crisis.
Many recent researches have put attention on this object since the Asian banking crisis of the late 90s, and the more recent banking crisis of 2007-2008. Most of them used directed graphs (network) to model interdependencies of system finance. For example, in \cite{elliott2014financial},
values of organizations depend on each other - e.g., through cross-holdings of shares, debt, or other liabilities. 
By tracking how asset values and failure costs propagate through the network of interdependencies as domino effect, the authors show how the probability of cascades and their extent depend on two key aspects of cross-holdings:
integration and diversification. \cite{rogers2013failure} is interested in the role of linkage in interbank and provide condition when rescue consortia exists. 
In order to study the importance of institutions, one can use the Contagion Index \cite{cont2010network},
CDS spreads or equity volatility \cite{acharya2017measuring}, \cite{huang2009framework}... 
On the contrary, the global level of systemic risk in the entire network can be considered by many other authors. References on the global level is referred to \cite{cont2010network} for a short survey.
In another approach, \cite{acemoglu2015systemic} is interested in the stability and resilience of the financial system under effecting of negative shocks.
\cite{fouque2013systemic} focuses on the number of components reaching a “default” level in a given time. The authors used the mean-field limit and a large deviation estimate for a simple linear model of lending and borrowing banks to illustrate systematic risk. 
The mean-field limit is also used 
in \cite{chong2015partial} to prove a law of large numbers. 
Both articles investigate system of interacting stochastic differential equations.  

The present paper introduces a structural framework to model dependent defaults, with a particular interest in their contagion. The idea is based on the modeling of dependent stochastic intensities in \cite{jarrow2001counterparty}. 
By ``structural", we mean, as usual in the context, modeling bankruptcy or {\em default} of a contingent claim by an event that a {\em firm value process} reaches a level.
The level can be either endogenous as the seminal paper 
\cite{leland1996optimal}, but at this stage it is given exogenously. Since we are concerned with mutual dependence of the default, we consider a vector valued process, each component of which is the {\em value} of a firm. 
In our model, to describe the contagion, the default level of each firm is assumed to be affected every time a default of another firm occurs. Such kind of model has also been studied in \cite{chong2016contagion}. 
In their paper \cite{chong2016contagion}, the authors use the Bayesian network methodology to characterize the joint default distribution of the financial system at a given maturity.

In our model, a default of a firm 
brings about a prescribed constant jump 
to the default level of other firms. 
One default can therefore cause other 
defaults, but each of the second order default 
may trigger third order ones, and so on.
This gives a structural framework to Bayesian network type dependence of joint default probability. 

Another difference is that \cite{chong2016contagion} only considers the firms value at maturity time. The default or survive of a company is determined by its equity. It is different from our approach. We are not only interested in the number of default at given time but also in default time and number of default at default time. They depend on the state of firms value which hits some special zone, called contagion region.

\section{Model}\label{SectionModel}
The model used in the present paper is similar to the one in \cite{chong2016contagion}. 
Let $ X^i_t $ denote 
the firm value process of the $ i $-th company, for $ i=1, 2, \cdots, n $ with
$ n \geq 2 $. 
Define ``default time" by
\begin{equation*}
\tau^i := \inf \{ s \geq 0: X_s^i \leq K^i \}, 
\end{equation*}
where $ k^i \in \mathbf{R} $ is a exogenously  given 
default level for the $ i $-th company. 
We assume that $ X \equiv (X^1, \cdots, X^n) $ solves the following equation; 
\begin{equation}\label{general}
\begin{split}
X^i_t &= x^i -
\sum_{j \ne i} C_{i,j} 1_{\{ \tau^j < t \wedge \tau^i \}}
+ \int_0^{t \wedge \tau^i} 
( \sigma_i (X^i_s) dW^i
+ \mu^i (X_s^i) \, dt) \\ 
&+\sum_{j \ne i} \int_{t \wedge \tau^j\wedge \tau^i}^{t \wedge \tau^i} 
( \sigma_i^j (X^i_s) dW^i 
+ \mu_{i}^j (X_s) \, dt) 
\end{split}
\end{equation}
for $ i= 1, 2, \cdots, n $, where
$ W^i, i=1,\cdots, n $ are independent Brownian motions, 
and for 
$ i,j =1,\cdots, n $, 
$ C_{i,j} $ are non-negative constants, 
$ \sigma_i $, $ \mu_i $, $ \sigma_{i}^j $, 
$ \mu_i^j $, each defined on $ \mathbf{R}^n $,
are smooth function 
with at most linear growth.

In a more concise way of saying, 
\begin{itemize}
\item each component 
is a diffusion process, independent to 
each other, for
each interval 
from a default time to next one,
\item the default of $ i $-th company brings about a jump $ C_{ij} $ to 
$ j $-th company,
\item \underline{which may causes the default of $ j $-th company}. 
\item The $ i $-th default may also 
affects the dynamics of the $ j $-th 
firm value process in terms of 
its growth rate or the volatility. 
\end{itemize}

Define the first contagion time by 
\begin{equation*}
\tau (1) := 
\min \{ \tau_i : i=1, \cdots, n \},
\end{equation*}
with the convention that, 
$ \min \emptyset = \infty $.
The $ j $-th contagion time is defined  recursively by
\begin{equation*}
\tau (j) := \min \{ \tau_k : \tau_k > \tau(j-1) \}, \quad j=2, 3, \cdots, n^*,
\end{equation*}
where 
$ n^* $ is the random variable 
so defined that at  
$ n^* $-the contagion time
all the companies left default. 
Note that $ n^* \leq n $ but 
each of $ \tau (j) $ can be infinity. 

To price credit derivatives
such as CDO or CDS, 
the distribution of the number of defaulted companies by a fixed time, denoted by $N_{t}$,
and the joint distribution of 
$ \tau_i, i \in I_0 \subset 
\{1,\cdots, n \} $
are required. 

These are in principle obtained from the joint distribution of
\begin{equation*}
    (\tau(1), 
    \cdots, \tau (n), d (\tau(1)), 
    \cdots, d (\tau(n) )),
\end{equation*}
where  
\begin{equation*}
    d (\tau (k))
    = \{ i \in \{1,\cdots, n \} : \tau_i = \tau (k) \}, 
    k=1, \cdots, n^*.
\end{equation*}

\section{General Case}

\subsection{Key Ideas}
The first key idea is that
we regard 
$ (\tau(i), X_{\tau (i)} ) $ as (something like) a ``renewal-reward" process. 
We shall have a formula
of the joint density 
of 
$$ ( d (\tau (1)), \tau(1), X_{\tau(1)}) $$ 
conditioned by the starting point $ X_0 $.
Here we understand 
$ X_{\tau (1) +t }, t \geq 0 $
to be an $ R^{d(\tau(1))^c} $-valued
process; we are only interested in
the survived companies.
Then, by replacing 
$ \{1,\cdots, n \} $ with 
$ d(\tau(1))^c $ and 
$ X_0 $ with $ X_{\tau(1)} $, 
we obtain the joint distribution
of $$ ( d (\tau (2)), \tau(2), X_{\tau(2)}) $$
conditioned by 
$ X_{\tau(1)} $, 
thanks to Markov property of $ X $.
We can repeat this procedure to get 
the desired joint distribution.

We can separate the problem 
of determining 
the joint distribution of 
$ ( d (\tau (1)), \tau(1), X_{\tau(1)}) $
into three parts.
\begin{itemize}
\item We pretend that 
we are given 
the harmonic measure of 
$ X_{\tau (1)-} $ (before
the ``artificial" jumps):
in a simple Brownian case
it is known. 
\item Then the problem reduces to the description
of ``contagion domain'',
but it may not be in the form of 
disjoint union.
\item To get a computable 
form, we rely on 
the independence and  
a recursive equation.
\end{itemize}

To take into account 
that we work on a ``renewal" setting 
described as above, 
from now on we let the index set of 
$ X $ be arbitrary finite subset.
In order to specify the initial 
index set, we put superscript $ I $
to the previously defined notations;
$ \tau^I (1) $, $ d^{I} $, and so on. 
We then concentrate on the study 
of the joint distribution 
of
\begin{equation}\label{www0}
(d^I (\tau^I (1)),  \tau^I (1), 
X^{I \setminus J}_{\tau^I (1)} ).
\end{equation}

\if0
Note that in the present paper, 
$ \mathbf{R}^I $ 
for each $ I$ is 
projected from 
$ \mathbf{R}^{\mathbf{N}} $
by the natural projection 
$ \mathbf{N} \to I $;
\fi

\subsection{Contagion domain}
\if0
Before stating our main result, 
we show that the distribution 
of \eqref{www0}
has a density
by relating the event 
$ \{ d^I (\tau^I(1)) =J \} $
to the event that $ X^I_{\tau^I (1)-} $
hit a set.  
\fi

Let $ I := \{ i_1, \cdots, i_{\sharp I} \} $
and for a permutation $ \sigma \in \mathfrak{S}_{I}
$ over $ I $, 
we put
\begin{equation*}
\begin{split}
& D_{I,\sigma} := \\
& \Big\{ (x_{i_1}, \cdots, x_{i_{\sharp I}}) \in \mathbf{R}^I :  
x_{i_{\sigma(1)}} 
= K^{i_{\sigma (1)}}, x_{i_{\sigma(2)}} \in (K^{i_{\sigma(2)}}, K^{i_{\sigma(2)}} + C_{i_{\sigma(1)}, 
i_{\sigma(2)}}], \\
& \hspace{2cm}
\cdots, x_{i_{\sigma(\sharp I)}} \in (K^{i_{\sigma(\sharp I)}},  K^{i_{\sigma(\sharp I)}} + \sum_{j=1}^{ \sharp I -1} 
C_{i_{\sigma (j)}, i_{\sigma (\sharp I)}}]
\Big\}.
\end{split}
\end{equation*}
Then, we have the following 
\begin{lemma}\label{cond}
For $ \emptyset \ne J \subset I $, 
we have that
\begin{equation*}
\begin{split}
& \{ d^I (\tau^I(1)) =J \} \\
& = \left \{ X^I_{\tau^I (1)-} 
= (X^{J}_{\tau^I(1)-},
X^{I \setminus J}_{\tau^I(1)-}) \in 
\bigcup_{\sigma \in \mathfrak{S}_J} D_{J,\sigma} \times \prod_{i \in I \setminus J}
(K^i + \sum_{j \in J}
C_{j,i}, \infty) 
\right\}. 
\end{split}
\end{equation*}
\end{lemma}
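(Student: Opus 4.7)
The plan is to decompose the cascade triggered at the first contagion time $\tau^I(1)$ and to verify both inclusions set-theoretically. The key observation is that, almost surely, exactly one firm hits its default level continuously at $\tau^I(1)$, and the remaining defaults in $d^I(\tau^I(1))$ are determined by a chain of jumps whose trajectory is encoded by the pre-jump state $X^I_{\tau^I(1)-}$ via (\ref{general}).

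For the inclusion $\subseteq$, assuming $d^I(\tau^I(1))=J$, I would identify $i_{\sigma(1)} \in J$ as the firm that hits continuously (so that $X^{i_{\sigma(1)}}_{\tau^I(1)-}=K^{i_{\sigma(1)}}$) and then inductively order the remaining defaults: $i_{\sigma(k)}$ is taken to be the $k$-th firm brought to its default level by the accumulated jumps from $i_{\sigma(1)},\dots,i_{\sigma(k-1)}$. This yields membership in $D_{J,\sigma}$ upon observing, for each $k \ge 2$, that $X^{i_{\sigma(k)}}_{\tau^I(1)-} > K^{i_{\sigma(k)}}$ (otherwise $i_{\sigma(k)}$ would have been the continuous hitter) and $X^{i_{\sigma(k)}}_{\tau^I(1)-} \le K^{i_{\sigma(k)}}+\sum_{j<k} C_{i_{\sigma(j)},i_{\sigma(k)}}$ (otherwise the accumulated jumps would fail to push it below $K^{i_{\sigma(k)}}$). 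For firms in $I \setminus J$, non-default translates into being strictly above $K^i + \sum_{j \in J} C_{j,i}$ after the full cascade's jumps have been applied.

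For the reverse inclusion $\supseteq$, I would argue that if $X^I_{\tau^I(1)-}$ lies in $D_{J,\sigma} \times \prod_{i \in I \setminus J}(K^i + \sum_{j \in J} C_{j,i},\infty)$ for some $\sigma \in \mathfrak{S}_J$, then the firm at $K^{i_{\sigma(1)}}$ triggers the cascade, the interval conditions successively force $i_{\sigma(2)},\dots,i_{\sigma(\sharp J)}$ to default at the same instant, and the survival band on $I \setminus J$ prevents any additional defaults, so $d^I(\tau^I(1)) = J$.

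The hard part will be justifying the ``exactly one continuous hitter'' claim and handling the degenerate situation in which several firms are simultaneously pushed to their default level by the same jump: the permutation $\sigma$ realizing the cascade order is then not unique, but since the statement takes a union over all $\sigma \in \mathfrak{S}_J$, the ambiguity is absorbed. The independence of the Brownian motions $W^i$ together with the continuity of each $X^i$ between jump times is what reduces the event of two simultaneous continuous hits to a null set, so the lemma holds as a $P$-a.s.\ identity of events.
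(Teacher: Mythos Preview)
Your proposal is correct and follows essentially the same approach as the paper's own proof: both identify the cascade order by a permutation $\sigma \in \mathfrak{S}_J$, with $i_{\sigma(1)}$ the continuous hitter and each subsequent $i_{\sigma(k)}$ forced into default by the accumulated jumps from $i_{\sigma(1)},\dots,i_{\sigma(k-1)}$, while firms in $I\setminus J$ survive precisely when their pre-jump value exceeds $K^i+\sum_{j\in J}C_{j,i}$. You are in fact more careful than the paper, which simply asserts the equivalence by listing the conditions; your explicit treatment of both inclusions, the almost-sure uniqueness of the continuous hitter via independence of the $W^i$, and the harmless non-uniqueness of $\sigma$ (absorbed by the union over $\mathfrak{S}_J$) fill in details the paper leaves implicit.
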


\begin{proof}
The relation is clear if one sees 
that 
$ d^I (\tau^I (1))= J 
:= \{ j_1, \cdots, j_m \} $
is equivalent to the following:
there is a permutation 
$ \sigma \in \mathfrak{S}_J $
such that 

\begin{enumerate}[(i)]
\item 
$ X^j_{\sigma(1)} $ hits 
$ K^{j_{\sigma(1)}} $, 
\item at the hitting time 
$ X^{j_{\sigma(2)}} $
is in the interval 
$ (K^{j_{\sigma(2)}},K^{j_{\sigma(2)}} + C_{j_{\sigma (1)},
j_{\sigma(2)}} ] $
so that 
the $ j_{\sigma(1)} $-th company's 
default caused 
$ j_{\sigma (2)} $-th company,
\item $ X^{j_{\sigma(3)}} $
is in the interval 
$ (K^{j_{\sigma(3)}},K^{j_{\sigma(2)}} + C_{j_{\sigma (1)},
j_{\sigma(3)}} + C_{j_{\sigma (2)},
j_{\sigma(3)}}] $ 
so that 
$ j_{\sigma(3)} $-th company 
defaulted due to $ j_{\sigma(1)}
$ and/or $ j_{\sigma(2)} $-th company's default,

\qquad $ \vdots $

\item[(m)] $ X^{j_{\sigma(m)}} 
\in (K^{j_{\sigma(m)}},  K^{j_{\sigma(m)}} + \sum_{l=1}^{ m-1} 
C_{j_{\sigma (l)}, i_{\sigma (m)}}]$,
\end{enumerate}
and 
(m+1) for 
$ i \in I \setminus J $, 
$ X^{i} \in (K^i + \sum_{j \in J} C_{j,i}, \infty) $ so that 
the default of the companies 
indexed by $ J $ did not 
cause the default of 
$ i $-th company. 
\end{proof}

We put
\begin{equation*}
\begin{split}
D^I_I := \bigcup_{\sigma \in \mathfrak{S}_I} D_{I,\sigma},
\end{split}
\end{equation*}
and for non-empty $ J \subsetneq I $,
we put
\begin{equation*}
\begin{split}
D^I_J :=D^J_J \times A^I_J,
\end{split}
\end{equation*}
where 
\begin{equation}\label{AIJ}
\begin{split}
A^I_J 
:= \prod_{i \in I \setminus J}
(K^i + \sum_{j \in J}
C_{j,i}, \infty).
\end{split}
\end{equation}

\begin{lemma}\label{disj}
$ D^I_{J} \cap D^I_{J'} =
\emptyset $ if 
$ J \ne J' $.
\end{lemma}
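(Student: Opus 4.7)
The plan is to show that for each $x \in D^I_J$, the set $J$ is uniquely recoverable from $x$ via a deterministic ``cascade closure'' procedure; once such a recovery map is exhibited, disjointness for different $J$ is automatic.

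For $x \in \mathbf{R}^I$, I would define $\mathrm{Cas}(x) \subset I$ iteratively by setting $S_0 := \{j \in I : x_j \leq K^j\}$ and $S_{n+1} := S_n \cup \{j \in I : x_j \leq K^j + \sum_{k \in S_n} C_{k,j}\}$, with $\mathrm{Cas}(x) := \bigcup_n S_n$, which stabilizes in finitely many steps since $I$ is finite. This is precisely the list of companies that will default in the chain reaction described by Lemma~\ref{cond}.

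The key step is to prove that $x \in D^I_J$ implies $\mathrm{Cas}(x) = J$. Pick a permutation $\sigma \in \mathfrak{S}_J$ such that the $J$-coordinates of $x$ lie in $D_{J,\sigma}$. For the inclusion $J \subset \mathrm{Cas}(x)$, I would induct on $k$: the equality $x_{j_{\sigma(1)}} = K^{j_{\sigma(1)}}$ puts $j_{\sigma(1)}$ into $S_0$, and assuming $\{j_{\sigma(1)}, \dots, j_{\sigma(k-1)}\} \subset S_{k-2}$, the defining bound $x_{j_{\sigma(k)}} \leq K^{j_{\sigma(k)}} + \sum_{l=1}^{k-1} C_{j_{\sigma(l)}, j_{\sigma(k)}}$ together with $C \geq 0$ forces $j_{\sigma(k)} \in S_{k-1}$. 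For the reverse inclusion $\mathrm{Cas}(x) \subset J$, when $J \subsetneq I$ the factor $A^I_J$ ensures $x_i > K^i + \sum_{j \in J} C_{j,i}$ for every $i \in I \setminus J$; by a parallel induction, if $S_n \subset J$ then no such $i$ enters $S_{n+1}$, so $S_n \subset J$ for all $n$. The case $J = I$ is automatic since $\mathrm{Cas}(x) \subset I = J$.

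Having established $J = \mathrm{Cas}(x)$, any common point of $D^I_J$ and $D^I_{J'}$ would give $J = \mathrm{Cas}(x) = J'$, contradicting $J \neq J'$. The only point requiring care, and the main bookkeeping obstacle, is aligning the cascade-along-$\sigma$ induction (which at stage $k$ uses only contributions of $\sigma$-predecessors of $j_{\sigma(k)}$) with the iterative $S_n$ (which at each stage uses the full accumulated set); monotonicity in the $C$'s together with the nesting $\{j_{\sigma(1)}, \dots, j_{\sigma(k-1)}\} \subset S_{k-2}$ resolves this.
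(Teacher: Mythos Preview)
Your proof is correct and takes a genuinely different route from the paper's. The paper argues by direct contradiction: assuming $J\setminus J'\neq\emptyset$, it picks the element $k_{j_*}\in J\setminus J'$ that appears earliest in the $\sigma$-ordering witnessing $x\in D_{J,\sigma}$, observes that all its $\sigma$-predecessors lie in $J\cap J'\subset J'$, and then notes that the resulting upper bound $x_{k_{j_*}}\le K^{k_{j_*}}+\sum_{j\in J'}C_{j,k_{j_*}}$ clashes with the strict lower bound coming from the $A^I_{J'}$ factor. Your approach instead builds a deterministic ``cascade closure'' map $\mathrm{Cas}$ and shows that $D^I_J$ is exactly the fibre $\mathrm{Cas}^{-1}(J)$, so distinct $J$'s give disjoint sets automatically.

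What each buys: the paper's argument is shorter and needs no auxiliary construction---one well-chosen coordinate does the job. Your argument is more structural: it explains the disjointness as the statement that $\{D^I_J\}_{J}$ is a family of level sets, ties the sets $D^I_J$ directly to the contagion mechanism of Lemma~\ref{cond}, and as a free byproduct essentially yields the decomposition $D^I=\bigcup_{J\subset I} D^I_J$ (equation~\eqref{rec2} in Lemma~\ref{recf}), since every $x\in D^I$ has $\mathrm{Cas}(x)$ equal to \emph{some} $J$. The one place to be careful---which you flagged---is the index bookkeeping between the $\sigma$-ordered cascade and the greedy iteration $S_n$; your use of monotonicity in the nonnegative $C_{i,j}$'s together with $\{j_{\sigma(1)},\dots,j_{\sigma(k-1)}\}\subset S_{k-2}$ handles it cleanly.
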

\begin{proof}
We may assume without 
loss of generality, 
$ J \setminus J' $ is non-empty. 
Let 
$ J\setminus J'
= \{ k_1, \cdots, k_l \} $
and $ J \cap J' 
= \{ k_{l+1}, \cdots, k_{\sharp J} \} $. 
Then, for 
$ x \in D^I_J $,
there is a permutation
$ \sigma \in \mathfrak{S}_J $
such that 
\begin{equation}\label{DJdash}
\begin{split}
&(x_{k_1}, 
\cdots, x_{k_l}) \\
& \in 
(K^{k_1},
K^{k_1} + \sum_{j: \sigma^{-1}(j) < \sigma^{-1} (1)} C_{k_j,k_1}]
\times \cdots 
\times (K^{k_l},
K^{k_l} + \sum_{j: \sigma^{-1}(j) < \sigma^{-1} (l)} C_{k_j,k_l}],
\end{split}
\end{equation}
where if $ \sigma (i) =1 $,
the sum is set to be zero. 
On the other hand, 
for $ x = (x_{i_1}, \cdots, x_{i_{\sharp I}}) \in 
D^I_{J'} $, 
it holds that 
\begin{equation}\label{DJ}
(x_{k_1}, 
\cdots, x_{k_l}) \in 
(K^{k_1} + \sum_{j \in J'}
C_{j,k_1}, \infty)
\times \cdots \times 
(K^{k_l} + \sum_{j \in J'}
C_{j,k_l}, \infty)
\end{equation}
since $ \{k_1,\cdots, k_l\}  
\subset I \setminus 
J' $. 

Let 
\begin{equation*}
j_* := 
\mathop{\mathrm{argmin}}_{j \in J} \sigma^{-1} (j).
\end{equation*}
Then 
\begin{equation*}
\sigma^{-1} (j) < \sigma^{-1}(j_*) 
\end{equation*}
implies $ j \in J' \cap J $
and therefore
\begin{equation*}
\{j: \sigma^{-1} (j) < 
\sigma^{-1} (j_*) \} \subset J'.
\end{equation*}
Now we see that 
\eqref{DJdash} and 
\eqref{DJ} are not compatible, 
meaning that 
$ D^I_J \cap D^I_{J'} $
is empty. 
\end{proof}

Further, we set
\begin{equation*}
\begin{split}
D^I := \bigcup_{i \in I} 
\{K^i\} \times 
\prod_{j \ne i} (K^j,\infty).
\end{split}
\end{equation*}
Then we have the following 
\begin{lemma}\label{recf}
Let $ J_1, \cdots, J_k  $  
be such that 
$ \emptyset \ne J_k \subsetneq 
\cdots \subsetneq J_1 $,
and set 
\begin{equation*}
H_{J_1, \cdots, J_k} := 
D^{J_k} \times A^{J_{k-1}}_{J_k}
\times \cdots \times A^{J_1}_{J_2}.
\end{equation*}
Then it holds that
\begin{equation}\label{rec3}
D^I_I = H_I \setminus 
\uplus_{\emptyset \ne J_1 \subsetneq I} (H_{I,J_1} 
\setminus \uplus_{\emptyset \ne J_2 \subsetneq J_1}
( H_{I,J_1, J_2} \setminus \uplus_{\emptyset \ne J_3 
\subsetneq J_2} (H_{I, J_1, J_2, J_3}
\setminus \cdots))) 
\end{equation}
so that for any measure $ \mu $ on 
$ \mathfrak{B} (D^I) $, 
\begin{equation}\label{rec4}
\begin{split}
\mu (D^I_I) &= 
\mu (H_I) +\sum_{\emptyset \ne J_1 \subsetneq I}
(-1) \mu (H_{I,J_1}) 
+ \sum_{J_1 \subsetneq I}\sum_{\emptyset \ne  J_2 \subsetneq J_1}(-1)^2 \mu ( H_{I,J_1, J_2})
\\
& + \cdots +
\sum_{J_1 \subsetneq I} 
\cdots \sum_{\emptyset \ne J_{\sharp I-1} 
\subsetneq J_{\sharp I-2}} 
(-1)^{\sharp I-1} \mu (H_{I,J_1, \cdots, 
J_{\sharp I -1}})   . 
\end{split}
\end{equation}
\end{lemma}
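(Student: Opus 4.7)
The strategy is to reduce \eqref{rec3} to a single one-step decomposition that can be unfolded recursively, and then to obtain \eqref{rec4} by finite additivity at each level. The key intermediate claim is that for every non-empty $J$,
\begin{equation*}
H_J = D^J = \biguplus_{\emptyset \ne J' \subset J} D^J_{J'}
\end{equation*}
(disjoint union). Disjointness is given by Lemma~\ref{disj}, and each $D^J_{J'} \subset D^J$ is immediate from the definitions. For the reverse inclusion, pick $x \in D^J$; there is a unique hit index $i_0$ with $x_{i_0} = K^{i_0}$ and $x_\ell > K^\ell$ for $\ell \ne i_0$. I would build the resulting default set greedily by $K_0 = \{i_0\}$ and $K_{m+1} = K_m \cup \{\ell \in J \setminus K_m : x_\ell \le K^\ell + \sum_{k \in K_m} C_{k,\ell}\}$, stabilising at some $J' \subset J$. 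Ordering the absorbed elements by the step at which they join furnishes a permutation $\sigma \in \mathfrak{S}_{J'}$ with $x|_{J'} \in D_{J',\sigma}$, while for $\ell \in J \setminus J'$ the stopping condition delivers $x_\ell > K^\ell + \sum_{k \in J'} C_{k,\ell}$, i.e.\ $x|_{J \setminus J'} \in A^J_{J'}$; hence $x \in D^J_{J'}$.

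Rewriting this intermediate claim as $D^J_J = H_J \setminus \biguplus_{\emptyset \ne J' \subsetneq J} D^J_{J'}$ and using the product structure $D^J_{J'} = D^{J'}_{J'} \times A^J_{J'}$, the identity \eqref{rec3} follows by iteration. At the first level this reads $D^I_I = H_I \setminus \biguplus_{J_1} (D^{J_1}_{J_1} \times A^I_{J_1})$; applying the intermediate claim to each inner $D^{J_1}_{J_1}$ and factoring the common $A^I_{J_1}$ out of the union reproduces the pattern one layer deeper, with $H_{I,J_1,J_2} = D^{J_2} \times A^{J_1}_{J_2} \times A^I_{J_1}$ now playing the role of $H_{J_1}$. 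The recursion terminates after at most $\sharp I - 1$ steps: each descent strictly shrinks the index set, and once $\sharp J_k = 1$ there is no non-empty proper subset left, so $D^{J_k}_{J_k} = H_{J_k}$ automatically and nothing further is subtracted.

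For \eqref{rec4}, I would apply the elementary identity $\mu(A \setminus \biguplus_\alpha B_\alpha) = \mu(A) - \sum_\alpha \mu(B_\alpha)$ at each of the nested levels. The hypothesis $B_\alpha \subset A$ holds throughout by construction of the $H_{I,J_1,\ldots,J_k}$ as nested products of $D$'s and $A$'s, and disjointness at each level is Lemma~\ref{disj} applied after cancelling the common $A$-factors. Expanding all $\sharp I - 1$ nested levels produces the alternating sum in \eqref{rec4} term by term.

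The most delicate point is verifying the intermediate claim in Step~1: one has to check both that the greedy construction places $x|_{J \setminus J'} \in A^J_{J'}$ with the \emph{full} sum $\sum_{k \in J'} C_{k,\ell}$ (and not merely some intermediate $\sum_{k \in K_m} C_{k,\ell}$), and that the recovered permutation really lands $x|_{J'}$ in $D^{J'}_{J'}$. Both rely on monotonicity of the iterates $K_m$ together with non-negativity of the $C_{i,j}$, which allows one to upgrade the greedy inequality $x_\ell \le K^\ell + \sum_{k \in K_m} C_{k,\ell}$ to the inequality with $\sum_{j : \sigma^{-1}(j) < \sigma^{-1}(\ell)} C_{j,\ell}$ required by the definition of $D_{J',\sigma}$.
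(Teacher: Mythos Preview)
Your proposal is correct and follows essentially the same approach as the paper: prove the partition $D^J = \biguplus_{\emptyset \ne J' \subset J} D^J_{J'}$ by a greedy cascade construction together with Lemma~\ref{disj}, then unfold recursively using the product structure $D^J_{J'} = D^{J'}_{J'} \times A^J_{J'}$. The paper's greedy adds one index at a time rather than a whole batch per step, and it leaves \eqref{rec4} as an immediate consequence of \eqref{rec3}; these are cosmetic differences, and your explicit handling of the permutation and of the measure identity is if anything more careful.
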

\begin{proof}
We first prove 
\begin{equation}\label{rec2}
D^I = \cup_{J \subseteq I} D^I_J.
\end{equation}
That $ D^I $ includes 
the right-hand-side is clear.　
Suppose that $ x \in D^I $. 
Then, $ x^i = K^i $ for a unique $ i $ 
and $ x^j > K^j $ for $ j \ne i $. 
If there exits $ j \ne i $ such that
$ x^j \leq K^j + C_{i,j} $, we
rename it as $ j_1 $. Otherwise, 
$ (x_j)_{j \in I \setminus \{i\}} \in 
A^I_{\{i\}} $ and so $ x \in D^I_{\{i\}} $.
Among $ I \setminus\{i,j_1\} $, 
if we could find $ j $ such that 
$ x_j \leq K^j + C_{i,j} + C_{j_1,j} $, 
we rename it $ j_2 $. Otherwise 
$ x \in D^I_{\{i,j_1\}} $.
This procedure can be repeated at most 
$ \sharp I - 1 $ times when we have 
$ x \in D^I_I $. So in any case 
$ x \in \cup_{J \subseteq I} D^I_J $.   

Next, observe that
\begin{equation*}
\begin{split}
D^I_I &= D^I \setminus \uplus_{J_1 \subsetneq 
I } D^I_{J_1} \\
&= D^I \setminus \uplus_{J_1 \subsetneq 
I } (D^{J_1}_{J_1} \times A^I_{J_1})
\end{split}
\end{equation*}
by \eqref{rec2} and Lemma \ref{disj}.
By applying \eqref{rec2} to
$ D^{J_1}_{J_1} $,
and by Lemma \ref{disj} we obtain that
\begin{equation*}
D^I_{I} = 
D^I \setminus 
\uplus_{J_1 \subsetneq 
I } \left( (D^{J_1} \times A^{I}_{J_1})
\setminus 
\uplus_{J_2 \subsetneq 
J_1 } (D^{J_2}_{J_2} \times
A^{J_1}_{J_2} ) \times A^{I}_{J_1} \right).
\end{equation*}
By applying \eqref{rec2} to
$ D^{J_2}_{J_2} $ and 
Lemma \ref{disj}, and so on, 
we finally reach \eqref{rec3}.
\end{proof}

\subsection{The first main result}
Let  
$ \emptyset \ne J \subsetneq I $, and define a family of measures  
\begin{equation*}
h^I_J(x, A, S) :=
P ( d^I (\tau^I (1)) = J, 
X^{I \setminus J}_{\tau^I (1)} \in A, \tau^I (1) \in S | X^I_0 =x ), 
\end{equation*}
\begin{equation*}
h^I (x,S) = P ( d^I(\tau^I (1)) = I, 
\tau(1) \in S  | X^I_0 =x ), 
\end{equation*}
and
\begin{equation*}
Q^I (x,A,S) = P (X^I_{\tau^I (1)-} \in A, 
\tau^I (1) \in S | X^I_0 =x )
\end{equation*}
for $ x \in \prod_{i \in I} (K^i, \infty) $, $ A \in \mathfrak{B} (D^I) $ and 
$ S \in \mathfrak{B} [0,\infty) $.
The last one is the harmonic measure of the process $ X^I $
on the boundary $ \partial \prod_{i \in I} (K^i, \infty) = D^I $.

Our first main result 
relates the joint distribution 
of $ (d^I (\tau^I (1)),  \tau^I (1), 
X^{I \setminus J}_{\tau^I (1)} )
$ to the harmonic measure $ Q $.
\begin{theorem}\label{firstmain}
We have that
\begin{equation}\label{WWW-N-0}
h^I_J (x, A, S) = Q^I (x, D^J_J \times s^I_J (A), S)  
\end{equation}
and 
\begin{equation}\label{WWW-N-1}
h^{I} (x,S) = Q^I (x, D^I_I, S). 
\end{equation}
Here $ s^{I}_J $ is a shift 
on $ \mathbb{R}^{I \setminus J} $ defined by 
\begin{equation}\label{shiftIJ}
s^I_J ((x^i, i \in I \setminus J))= ((x_i + \sum_{j \in J} C_{j,i}
)).
\end{equation}
\if1
and thus 
\begin{equation*}
s^I_J ( A )
= ( A + \sum_{j \in J } 
C_{j, \cdot })
\end{equation*}
so that 
for $ a^i< b^i $, $ i \in I \setminus J $, 
\begin{equation*}
s^I_J (\prod_{j \in I \setminus J}
(a^i,b^i)) = \prod_{j \in I \setminus J} 
( \max(K^i,a^i)  + \sum_{j \in J}
C_{j,i}, b^i + \sum_{j \in J}
C_{j,i}).
\end{equation*}
\fi
\end{theorem}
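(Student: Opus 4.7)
The plan is to derive both equalities directly from Lemma~\ref{cond} by converting the post-jump constraint on the surviving firms, which appears in the definition of $h^I_J$, into a pre-jump constraint, on which the harmonic measure $Q^I$ is defined.

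For equation~\eqref{WWW-N-0}, fix $\emptyset \ne J \subsetneq I$. Applying Lemma~\ref{cond} inside the definition of $h^I_J$ immediately gives
\begin{equation*}
h^I_J(x,A,S) = P\!\left(X^I_{\tau^I(1)-} \in D^J_J \times A^I_J,\; X^{I\setminus J}_{\tau^I(1)} \in A,\; \tau^I(1) \in S \,\Big|\, X^I_0 = x\right),
\end{equation*}
since $\bigcup_{\sigma \in \mathfrak{S}_J} D_{J,\sigma} = D^J_J$ and the product term is exactly $A^I_J$. The next step is to translate the post-jump constraint on survivors. The diffusion part of the SDE~\eqref{general} is continuous, so at $t=\tau^I(1)$ only the contagion jumps triggered by defaults in $J$ matter, yielding $X^i_{\tau^I(1)} = X^i_{\tau^I(1)-} - \sum_{j\in J} C_{j,i}$ for each $i \in I \setminus J$, in the indexing convention of~\eqref{shiftIJ}. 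Consequently $X^{I\setminus J}_{\tau^I(1)} \in A$ is equivalent to $X^{I\setminus J}_{\tau^I(1)-} \in s^I_J(A)$.

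Because survivors satisfy $X^i_{\tau^I(1)} > K^i$, one may assume $A \subset \prod_{i \in I \setminus J}(K^i,\infty)$, and then the shifted set $s^I_J(A)$ automatically lies inside $A^I_J$. The constraint coming from the $A^I_J$ factor of Lemma~\ref{cond} is therefore subsumed by the one coming from $s^I_J(A)$, and invoking the definition of $Q^I$ gives
\begin{equation*}
h^I_J(x,A,S) = P\!\left(X^I_{\tau^I(1)-} \in D^J_J \times s^I_J(A),\; \tau^I(1) \in S \,\big|\, X^I_0 = x\right) = Q^I\!\left(x, D^J_J \times s^I_J(A), S\right),
\end{equation*}
which is \eqref{WWW-N-0}.

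For \eqref{WWW-N-1}, take $J=I$ in Lemma~\ref{cond}: the product factor disappears because $I\setminus J$ is empty, no pre-jump/post-jump conversion is needed, and the event $\{d^I(\tau^I(1))=I\}$ coincides with $\{X^I_{\tau^I(1)-} \in D^I_I\}$. Plugging this into the definition of $h^I$ and using the definition of $Q^I$ completes the proof. The only subtle point in the entire argument is the pre-jump/post-jump bookkeeping for survivors in the first part, and this is exactly what the shift $s^I_J$ is designed to encode; everything else is a direct substitution.
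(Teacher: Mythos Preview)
Your proof is correct and follows essentially the same route as the paper's: both apply Lemma~\ref{cond} and then recognize that the post-jump condition $X^{I\setminus J}_{\tau^I(1)}\in A$ translates, via the contagion jumps, into the pre-jump condition $X^{I\setminus J}_{\tau^I(1)-}\in s^I_J(A)$, which yields $Q^I$. You have simply made explicit the bookkeeping (the jump relation and the inclusion $s^I_J(A)\subset A^I_J$) that the paper's proof compresses into ``almost clear from Lemma~\ref{cond}.''
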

\begin{proof}
The formula 
\eqref{WWW-N-0} 
is almost clear from Lemma \ref{cond} since we have
\begin{equation*}
\begin{split}
&\{d^I (\tau^I (1)) = J, 
X^{I \setminus J}_{\tau^I (1)} \in A, 
\tau^I (1) \in S\} \\
&=  \left \{ X^I_{\tau^I (1)-} \in 
\bigcup_{\sigma \in \mathfrak{S}_J} D_{J,\sigma} \times 
s^I_J (A) 
\right\} 
\cap \{\tau^I (1) \in S\}
\end{split}
\end{equation*}
for 
$ S \in \mathfrak{B} (0,\infty) $.
The formula \eqref{WWW-N-1}
is also clear from 
Lemma \ref{cond}. 
\end{proof}

\section{Joint Distribution 
of Contagions and Formulas for Contagion Probabilities}\label{CDFdefault}

In this section, we are interested in 
the distribution of the number of defaulted companies by a fixed time, denoted by $N_{t}$,
and the distribution of 
the default time of some specific firm. 
These distributions are useful for pricing of credit derivatives like CDO or CDS. 

Denote 
\begin{align*}
\mathbb{I}_j := 
I \setminus \biguplus_{l=1}^{j} d (\tau^{I} (l)) = \mathbb{I}_{j-1} \setminus d (\tau^I (j)) 
\end{align*}
the random set of indices of the firms that survived after the 
$ j $-th contagion time $\tau^{I}(j)$, for $ j=1,2,\cdots, \sharp I $. Let $U_1=\tau^{I}(1), U_2=\tau^{I}(2)-\tau^{I}(1),..., U_{k}=\tau^{I}(k)-\tau^{I}(k-1) ...$ be inter-arrival times between consecutive defaults.
We let $ U_i = \infty $ if $\tau (i) = \infty $. 

Thanks to the Markov property of the firm value evolution 
we get the following $ (X^{\mathbb{I}_j}, \tau (j), \mathbb{I}_j) $.
\begin{theorem}
Let $ J_j \subset \mathbb{I}_{j-1} $
be a non-empty set, 
$ A_j \in \mathfrak{B} (\prod_{i \in \mathbb{I}_j} [K^i, \infty)) $, and 
$ S_j \in \mathfrak{B} ([0, \infty)) $, 
for $ j =1 ,2,\cdots, n^* 
$.
(i) We have the ``renewal Markov property":
for $ m = 1, \cdots, n^* $,
\begin{align*}
&P\left(X^{\mathbb{I}_{m}}_{\tau^{J}(m)}\in A_{m}, d(\tau^{I}(m))=J_{m}, \tau^I (m) \in S_{m}\mid 
\{X^{\mathbb{I}_{j}}_{\tau^{I}(j)}, \mathbb
{I}_{j}, \tau^{I}(j); j<m \}
\right)\\
=&P\left(X^{\mathbb{I}_{m}}_{\tau^{I}(m)}\in A_{m}, d(\tau^{I}(m))=J_{m}, 
\tau^I (m) \in S_{m}\mid X^{\mathbb{I}_{m-1}}_{\tau^{I}(m-1)}, \mathbb{I}_{m-1}, \tau^{I}(m-1)
\right),
\end{align*}
and (ii) the transition probability 
is described by the harmonic measure as
\begin{align*}
\begin{aligned}
& P\left( X^{\mathbb{I}_{m} }_{\tau^I (m)} \in A_m,
d(\tau(m)) = J_m, 
\tau^I (m) \in S_m| X^{\mathbb{I}_{m-1}}_{\tau^{I}(m-1)},
\mathbb{I}_{m-1}, \tau^{I} (m-1) \right) \\
& = h^{\mathbb{I}_{m-1}}_{J_m} (X^{\mathbb{I}_{m-1}}_{\tau^I (m-1)}, A_m, S_m -\tau^I (m-1) ) \\
&= Q^{\mathbb{I}_{m-1}} (X^{\mathbb{I}_{m-1}}_{\tau^{I}(m-1)},
D^{J_m}_{J_m} \times 
s^{\mathbb{I}_{m-1}}_{J_m}(A_m),S_m- \tau^I (m-1) ) ). 
\end{aligned}
\end{align*}
(iii) Consequently, we have that 
\begin{align*}
&P\left( \{ X^{\mathbb{I}_{j}}_{\tau^{I}(j)}\in A_j, 
d(\tau^{I}(j))=J_{j}, U_{j}\in S_{j};  j\leq m \} \mid X^{I}_{0}=x^{I}
\right)\\
&= \int_{A_1 \times \cdots \times A_m} \prod_{j=1}^{m} h^{{I}_{j-1}}_{J_j} \left(
x^{{I}_{j-1}}_{j-1},  
dx^{I_j}_j, S_j
\right) \\
&=
\int_{\prod_{j=1}^m 
s^{I_{j-1}}_{J_j} (A_j)
} \prod_{j=1}^{m}Q^{{I}_{j-1}}\left(
(s^{I_{j-2}}_{J_{j-1}})^{-1}
(x^{I_{j-2}\setminus J_{j-1}
= {I}_{j-1}}_{j-1}), 
D^{J_{j}}_{J_{j}}\times 
dx^{I_{j-1}\setminus J_j=I_j}_j, S_j
\right),
\end{align*}
where
\begin{align*}
I_0 = I, \quad I_j := 
I \setminus \biguplus_{l=1}^{j} J_l, 
\quad j=1,2, \cdots, m \leq n^*,
\end{align*}
and $ s^{I_*}_{J_{**}} $
is the shift defined 
in the previous section as \eqref{shiftIJ},
with the convention
that $ s^{I_{-1}}_* $ is the identity map. 
\if2 
\begin{align*}
A'_j := A_j + \left(\sum_{p\in J_{j}}C_{p,.}\right)^{\mathbb{I}_{j}},
j=1,\cdots, \sharp I,
\end{align*}
\fi
\end{theorem}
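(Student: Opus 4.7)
The plan is to prove the three parts in order, with the strong Markov property of the underlying SDE doing the heavy lifting and Theorem \ref{firstmain} supplying the harmonic-measure formula at each renewal step.

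For part (i), I would observe that $\tau^I(m-1)$ is a stopping time with respect to the natural filtration of $(X^i, i \in I)$, and that the random set $\mathbb{I}_{m-1}$ together with $X^{\mathbb{I}_{m-1}}_{\tau^I(m-1)}$ is $\mathcal{F}_{\tau^I(m-1)}$-measurable. Inspection of the SDE \eqref{general} shows that on the interval $[\tau^I(m-1), \tau^I(m))$ the surviving components $(X^i : i \in \mathbb{I}_{m-1})$ satisfy, conditionally on $\mathbb{I}_{m-1}$, an SDE of exactly the same form as \eqref{general} but with index set $\mathbb{I}_{m-1}$, driven by the post-$\tau^I(m-1)$ increments of the independent Brownian motions $W^i$. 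By the strong Markov property applied to this diffusion, the law of the triple $(X^{\mathbb{I}_m}_{\tau^I(m)}, d(\tau^I(m)), \tau^I(m) - \tau^I(m-1))$ depends on $\mathcal{F}_{\tau^I(m-1)}$ only through $(X^{\mathbb{I}_{m-1}}_{\tau^I(m-1)}, \mathbb{I}_{m-1})$; including the (measurable) shift $\tau^I(m-1)$ on both sides gives (i).

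For part (ii), I would identify the conditional law obtained in (i) with the law at the first contagion time of the reduced system on $\mathbb{I}_{m-1}$ started at $X^{\mathbb{I}_{m-1}}_{\tau^I(m-1)}$. By the definitions of $h^I_J$ and Theorem \ref{firstmain}, with $I$ replaced by $\mathbb{I}_{m-1}$ and the starting point replaced by $X^{\mathbb{I}_{m-1}}_{\tau^I(m-1)}$, the conditional probability equals $h^{\mathbb{I}_{m-1}}_{J_m}(X^{\mathbb{I}_{m-1}}_{\tau^I(m-1)}, A_m, S_m - \tau^I(m-1))$, and substituting \eqref{WWW-N-0} yields the harmonic-measure expression with the shift $s^{\mathbb{I}_{m-1}}_{J_m}$ applied to $A_m$.

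Part (iii) is then obtained by induction on $m$. On the event $\{d(\tau^I(j)) = J_j \text{ for } j < m\}$ one has $\mathbb{I}_{j} = I_j$ deterministically, so the conditional probability in (ii) depends only on the value $X^{I_{m-1}}_{\tau^I(m-1)}$ and on $\tau^I(m-1) = U_1 + \cdots + U_{m-1}$. Iterating the tower property and writing $U_j = \tau^I(j) - \tau^I(j-1)$, the shifts $S_j \mapsto S_j$ make the time argument a function of the increments $U_j$ only. The first equality in (iii) is then a product of kernels $h^{I_{j-1}}_{J_j}$, and the second equality follows from \eqref{WWW-N-0} by inverting the shifts $s^{I_{j-2}}_{J_{j-1}}$ to re-express the starting point of the $j$-th step in terms of the integration variable of the $(j-1)$-th step.

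The main obstacle I anticipate is a bookkeeping one rather than a conceptual one: verifying carefully that on the random interval $[\tau^I(m-1), \tau^I(m))$ the surviving components really do form a time-homogeneous Markov family of the reduced type on $\mathbb{I}_{m-1}$, so that Theorem \ref{firstmain} can be invoked verbatim. This requires checking that the jumps $C_{i,j}$ incurred at $\tau^I(m-1)$ have already been absorbed into the new initial value $X^{\mathbb{I}_{m-1}}_{\tau^I(m-1)}$, and that the post-$\tau^I(m-1)$ Brownian increments remain independent of $\mathcal{F}_{\tau^I(m-1)}$; both are immediate from \eqref{general} but must be stated to justify dropping the history in the conditional probability.
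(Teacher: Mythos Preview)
Your proposal is correct and follows essentially the same approach as the paper: the paper's proof consists of three terse sentences invoking the (strong) Markov property of $X$ for (i), the time-homogeneous property together with Theorem~\ref{firstmain} for (ii), and the combination of (i) and (ii) for (iii). You have simply unpacked these claims with the appropriate detail about stopping times, the reduced post-$\tau^I(m-1)$ system, and the tower-property induction, which is exactly what the paper's skeleton proof implicitly requires.
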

\begin{proof}
The first assertion (i)
is clear from the Markov property of $ X $. 
The second one 
is also a direct consequence of 
the time-homogeneous property and Theorem \ref{firstmain}.
The third one is obtained by
combining 
(ii) and the renewal Markov property (i).
\end{proof}

As a consequence, we can get the 
``marginal distribution" concerning on inter-arrival time between two consecutive defaults and the set of the next default index.
\begin{corollary}\label{JointInter}
For $ J_1, \cdots, J_m \subset I $ with $ \emptyset \ne J_m^c \subsetneq \cdots \subsetneq J_1^c $, and 
$ S_1, \cdots S_m \in \mathfrak{B} ([0, \infty)) $, 
\begin{align*}
&P( \{ U_{j}\in S_{j}, d(\tau^{I}(j))=J_{j}; j\leq m \})\\
=&\prod_{j=1}^{m}\int_{A^{I_{j-1}}_{J_{j}}}
Q^{{I}_{j-1}}\left(
(s^{I_{j-2}}_{J_{j-1}} )^{-1} (x_{j-1}), 
D^{J_{j}}_{J_{j}}\times
dx_{j}, S_j \right),
\end{align*}
where $ A^{I_*}_{J_{**}} $
is the set defined in the previous section as \eqref{AIJ}.
\end{corollary}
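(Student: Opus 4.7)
The plan is to derive the corollary as a direct marginalization of assertion (iii) of the preceding theorem. Since we want only the joint law of $(U_j, d(\tau^I(j)))_{j \le m}$, we need to integrate out each spatial variable $X^{\mathbb{I}_j}_{\tau^I(j)}$ by taking $A_j$ in (iii) equal to the largest set in which this random vector can lie, conditional on the event $\{d(\tau^I(j)) = J_j\}$.

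Identifying this range is the only substantive step. Because the firms in $I_j = I_{j-1} \setminus J_j$ survive the $j$-th contagion, each coordinate of $X^{\mathbb{I}_j}_{\tau^I(j)}$ must lie strictly above its default level $K^i$, and conversely every point of $A_j^\star := \prod_{i \in I_j}(K^i, \infty)$ is attainable. Applying the shift $s^{I_{j-1}}_{J_j}$ defined in \eqref{shiftIJ} gives
\[
s^{I_{j-1}}_{J_j}(A_j^\star) = \prod_{i \in I_{j-1} \setminus J_j}\bigl(K^i + \sum_{k \in J_j} C_{k,i},\,\infty\bigr) = A^{I_{j-1}}_{J_j},
\]
which is exactly the integration region appearing in the statement of the corollary.

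Substituting $A_j = A_j^\star$ into part (iii) and interpreting the product of integrals as an iterated (Fubini) integral then produces the desired identity, with the variable $x_{j-1}$ in $Q^{I_{j-1}}$ being precisely the integration variable from the previous factor (and $x_0 = x^I$ via the convention $s^{I_{-1}}_{J_0} = \mathrm{id}$). One small consistency point to note is that extending $A_j^\star$ to the full space would not change the integral: by Lemma \ref{cond}, on $\{d(\tau^{I_{j-1}}(1)) = J_j\}$ the harmonic measure $Q^{I_{j-1}}$ is already supported on $D^{J_j}_{J_j} \times \prod_{i \in I_j}(K^i + \sum_{k \in J_j} C_{k,i}, \infty)$. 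The main (and essentially only) obstacle is bookkeeping: keeping track of the successive index sets $I_j$ and of the compositions of shifts $s^{I_{j-1}}_{J_j}$ so that the nested integrals line up correctly with the statement.
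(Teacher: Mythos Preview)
Your approach is correct and matches the paper's: the corollary is stated there simply ``as a consequence'' of part (iii) of the preceding theorem, i.e.\ by marginalizing out the positions $X^{\mathbb{I}_j}_{\tau^I(j)}$, which is exactly what you do by taking $A_j = \prod_{i\in I_j}(K^i,\infty)$ and observing that $s^{I_{j-1}}_{J_j}(A_j)=A^{I_{j-1}}_{J_j}$. If anything, your write-up is more explicit than the paper's, which leaves the identification of the integration domain entirely to the reader.
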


Corollary \ref{JointInter} is an important key to find out some important distributions
such as the number of defaulted
firms given a fixed time, the time to default of a given firm,
or the time to the $m$-th  default. 
The following proposition provides the distribution for the number of defaulted firms 
given a fixed time.
\if3
We write
\begin{align*}
R^{I}(x, t)=P(
\tau^{I}(1)>t|X^{I}_0=x),
\end{align*}
the survival distribution function of the first time to default. 
\fi
\begin{proposition}\label{NTK}
For $k=1,..., n$, we have
\begin{align*}
& P(N_{t}=k\mid X^{I}_{0}=x^{I}_{0})\\
&=\sum_{m=1}^{k}
\sum_{ \sharp \biguplus_{p\leq m } J_p = k }
\int_{
\substack{u_1+\cdots +u_m \leq  t \\
u_1 +\cdots +u_{m+1} >t }} 
\prod_{j=1}^{m} 
\\
& \hspace{2cm} 
\int_{A^{I \setminus \uplus_{l=1}^{j-1} J_l}_{J_j}}
Q^{{I \setminus \uplus_{l=1}^{j-1} J_l}}\left(
(s^{{I \setminus \uplus_{l=1}^{j-2} J_l}}_{J_{j-1}})^{-1} (x_{j-1}), 
D^{J_{j}}_{J_{j}}\times
dx_{j}, du_j \right). \\
\end{align*}
\end{proposition}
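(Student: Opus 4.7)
The plan is to partition the event $\{N_t=k\}$ according to the complete contagion history on $[0,t]$ and then read each piece off Corollary \ref{JointInter}.

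First, since $N_t$ can increase only at contagion times, I would write the disjoint decomposition
\begin{equation*}
\{N_t=k\} = \biguplus_{m=1}^{k}\;\biguplus_{\substack{(J_1,\ldots,J_m)\\ \sharp(\uplus_{p\leq m} J_p)=k}} \bigl\{d(\tau^I(j))=J_j,\; j\leq m\bigr\}\cap \bigl\{\tau^I(m)\leq t<\tau^I(m+1)\bigr\},
\end{equation*}
where each $J_j$ ranges over nonempty subsets of $I\setminus \uplus_{l<j}J_l$. This is disjoint because $m$ records the exact number of contagion events in $[0,t]$ while $(J_1,\ldots,J_m)$ pins down which firms default at each one, and it is exhaustive because the constraint $\sum_{p\leq m}\sharp J_p=k$ enforces precisely $k$ defaults by time $t$.

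Next, I would recast the temporal event as $\{U_1+\cdots+U_m\leq t,\;U_1+\cdots+U_{m+1}>t\}$ and apply Corollary \ref{JointInter}, which identifies the joint distribution of the inter-arrival times and the default sets with the product of harmonic-measure kernels $Q^{I_{j-1}}$ appearing in the statement. Taking $S_j=[0,\infty)$ for $j<m$ and incorporating the constraint $u_1+\cdots+u_m\leq t$ at the final step produces the first $m$ factors of the displayed integral directly, while the sum over $m$ and over nested $(J_1,\ldots,J_m)$ produced by the decomposition matches the outer sums in the proposition.

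The main obstacle is the survival constraint $\{U_{m+1}>t-\tau^I(m)\}$, which occurs in the decomposed event but carries no explicit $du_{m+1}$ in the stated formula. To treat it, I would invoke the renewal Markov property (part (i) of the preceding theorem): conditionally on the history up to $\tau^I(m)$, the probability of $\{U_{m+1}>t-\tau^I(m)\}$ depends only on $X^{\mathbb{I}_m}_{\tau^I(m)}$ and equals the survival probability $1-Q^{I_m}\!\bigl(X^{\mathbb{I}_m}_{\tau^I(m)},D^{I_m},[0,t-\tau^I(m)]\bigr)$ of the first contagion of the residual system started at $X^{\mathbb{I}_m}_{\tau^I(m)}$. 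Folding this factor into the integral against the joint density of the first $m$ steps produces the displayed expression, once the integration region $\{u_1+\cdots+u_{m+1}>t\}$ in the proposition is read as a shorthand encoding of this extra survival factor on the $(m+1)$-th inter-arrival time.
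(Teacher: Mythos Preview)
Your approach is exactly the paper's: decompose $\{N_t=k\}$ according to the number $m$ of contagion events in $[0,t]$ and the default sets $(J_1,\ldots,J_m)$, rewrite the temporal constraint in terms of the inter-arrival times, and invoke Corollary~\ref{JointInter}. Your discussion of the $(m{+}1)$-th survival factor is in fact more careful than the paper's own proof, which simply records the region $\{u_1+\cdots+u_{m+1}>t\}$ and cites Corollary~\ref{JointInter} without commenting on how $u_{m+1}$ enters.
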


\begin{proof}
Since the event $\{N_t=k\}$
can be expressed as $\biguplus_{m=1}^{k}
\{\tau^{I}(m)\leq t<\tau^{I}(m+1), \# (I \setminus \mathbb{I}_{m})
= \sum_{p=1}^m \# d (\tau ^{I}(p)) =k\}$,
we have that 
\begin{align}\label{formula5-1}
\begin{aligned}
&P(N_t=k\mid X^{I}_{0}=x^{I}_{0}) \\
&=\sum_{m=1}^{k}P \Big(\tau^{I}(m)\leq t<\tau^{I}(m+1), \sum_{p=1}^m \# d (\tau^{I} (p)) =k \mid X^{I}_{0}=x^{I}_{0} \Big) \\
&=\sum_{m=1}^k
\sum_{ \sharp \biguplus_{p\leq m } J_p = k }
P(\sum_{j=1}^m U_j \leq t<
\sum_{j=1}^{m+1} U_j, 
d (\tau^{I} (p)) = J_p, 
p\leq m 
\mid X^{I}_{0}=x^{I}_{0}).
\end{aligned}
\end{align}
By applying Corollary \ref{JointInter},
we have that 
\begin{align*}
\begin{aligned}
& (\text{the summand 
of \eqref{formula5-1}})\\
&= \int_{
\substack{u_1+\cdots +u_m \leq  t \\
u_1 +\cdots +u_{m+1} >t }} 
\prod_{j=1}^{m}\int_{A^{I \setminus \uplus_{l=1}^{j-1} J_l}_{J_j}}
Q^{{I \setminus \uplus_{l=1}^{j-1} J_l}}\left(
(s^{{I \setminus \uplus_{l=1}^{j-2} J_l}}_{J_{j-1}})^{-1} (x_{j-1}), 
D^{J_{j}}_{J_{j}}\times
dx_{j}, du_j \right).
\end{aligned}
\end{align*}
\end{proof}

We can also obtain the distribution function of the $m$-th contagion time.
\begin{proposition}
For $m=1,..., n$, we have
\begin{align}\label{kth_time}
\begin{aligned}
& P(\tau(m))>t\mid X^{I}_{0}=x^{I}_{0}) \\
&= \sum_{k=1}^{m-1}\sum_{\#\uplus_{i=1}^{k-1} I_{i}<n}
\int_{
\substack{u_1+\cdots +u_m \leq  t \\
u_1 +\cdots +u_{m+1} >t }} \\
& \hspace{1.5cm}
\prod_{j=1}^{m}\int_{A^{I \setminus \uplus_{l=1}^{j-1} J_l}_{J_j}}
Q^{{I \setminus \uplus_{l=1}^{j-1} J_l}}\left(
(s^{{I \setminus \uplus_{l=1}^{j-2} J_l}}_{J_{j-1}})^{-1} (x_{j-1}), 
D^{J_{j}}_{J_{j}}\times
dx_{j}, du_j \right). \\
\end{aligned}
\end{align}
\end{proposition}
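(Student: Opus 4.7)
The plan is to follow the same template as the proof of Proposition \ref{NTK}, replacing the decomposition by the number of defaulted firms with a decomposition by the number of contagion events. With the convention $\tau^I(0) = 0$, the event that the $m$-th contagion has not occurred by time $t$ admits the disjoint decomposition
\begin{equation*}
\{\tau^I(m) > t\} = \biguplus_{k=0}^{m-1} \{\tau^I(k) \leq t < \tau^I(k+1)\},
\end{equation*}
partitioning according to the number $k$ of contagion times that have already fallen in $[0, t]$.

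Next, for each $k$ I would refine this partition by summing over all admissible sequences $J_1, \ldots, J_k$ of labels, where $J_j$ is a non-empty subset of $I_{j-1} = I \setminus \biguplus_{l < j} J_l$, with $I_k$ allowed to be either empty or non-empty (if $I_k = \emptyset$, no further contagion is possible and $\tau^I(k+1) = +\infty$ automatically; this boundary case is simply the event $\{\sum_{j=1}^k U_j \leq t\}$ without the extra $U_{k+1}$-condition). For each admissible tuple,
\begin{equation*}
\{d(\tau^I(j)) = J_j,\ j \leq k,\ \tau^I(k) \leq t < \tau^I(k+1)\} = \{d(\tau^I(j)) = J_j,\ j \leq k\} \cap \Bigl\{ \sum_{j=1}^{k} U_j \leq t < \sum_{j=1}^{k+1} U_j \Bigr\}.
\end{equation*}

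I would then evaluate this probability by applying Corollary \ref{JointInter} to the extended tuple $(J_1, \ldots, J_k, J_{k+1})$, taking $S_j = [0, \infty)$ for $j \leq k$ and $S_{k+1} = (t - \sum_{j=1}^{k} u_j, \infty)$, and summing over $J_{k+1}$ running through all non-empty subsets of $I_k$. The resulting iterated integral carries all the nested harmonic measures $Q^{I_{j-1}}$ and shifts $s^{I_{j-2}}_{J_{j-1}}$ that appear in \eqref{kth_time}, so no additional estimates are needed.

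The main obstacle is combinatorial bookkeeping rather than analysis: correctly threading the shrinking index sets $I_j$ and the shifts through the recursion, and cleanly accounting for the two boundary cases $k = 0$ (which reduces to $1 - Q^I(x_0^I, D^I, [0, t])$) and $I_k = \emptyset$ (where the integration over $u_{k+1}$ is vacuous). Once these are sorted out, the formula is a direct transcription of Corollary \ref{JointInter}.
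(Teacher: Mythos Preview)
Your proposal is correct and follows essentially the same route as the paper: decompose $\{\tau^I(m)>t\}$ according to which contagion interval contains $t$, refine over admissible label sequences $(J_1,\ldots,J_k)$, and then invoke Corollary~\ref{JointInter} exactly as in the proof of Proposition~\ref{NTK}. Your explicit handling of the boundary cases $k=0$ and $I_k=\emptyset$ is in fact more careful than the paper's sketch, which simply refers back to the technique of Proposition~\ref{NTK}.
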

\begin{proof}
We have 
\begin{align*}
&P(\tau^{I}(k)>t\mid X^{I}_{0}=x^{I}_{0})\\
=&\sum_{m=1}^{k}P(\tau^{I}(m-1)\leq t<\tau^{I}(m)\mid X^{I}_{0}=x^{I}_{0})\\
=&\sum_{m=1}^{k}\sum_{\#\biguplus J_{p}<n}P(\tau^{I}(m-1)\leq t<\tau^{I}(m), d(\tau^{I}(p))=J_{p}, p\leq m\mid X^{I}_{0}=x^{I}_{0}).
\end{align*}
Using the same technique
as we did for 
Proposition \ref{NTK}, we obtain the formula \ref{kth_time}.
\end{proof}

The $k$-th company survives
up to time $t$ if and only if 
the index $k$ is always out of 
all the default set $d(\tau^{I}(p))$ up to time $t$. 
Hence we can get 
survival probability 
of a set of firms as
\begin{proposition}
Let $ K \subset I $ be a non-empty set. Then we have that 
\begin{align*}
\begin{aligned}
&P( \tau_{k}>t, k \in K \mid X^{I}_{0}=x^{I}_{0}) \\
&= \sum_{m=0}^{n- \sharp K}
\sum_{\uplus_{l=1}^p J_l \cap K = \emptyset }
\int_{
\substack{u_1+\cdots +u_m \leq  t \\
u_1 +\cdots +u_{m+1} >t }} \\
& \hspace{2cm}
\prod_{j=1}^{m}\int_{A^{I \setminus \uplus_{l=1}^{j-1} J_l}_{J_j}}
Q^{{I \setminus \uplus_{l=1}^{j-1} J_l}}\left(
(s^{{I \setminus \uplus_{l=1}^{j-2} J_l}}_{J_{j-1}})^{-1} (x_{j-1}), 
D^{J_{j}}_{J_{j}}\times
dx_{j}, du_j \right).
\end{aligned}
\end{align*}
\end{proposition}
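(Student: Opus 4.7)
The plan is to follow the template established by Propositions \ref{NTK} and the one on the $m$-th contagion time, adapting the combinatorial condition so that the surviving index set $K$ is respected.

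First I would observe that the event $\{\tau_k > t,\ k \in K\}$ is equivalent to saying that whatever contagions have occurred by time $t$, none of them involved any firm in $K$. More precisely, conditioning on the number $m$ of contagions that have taken place by time $t$, the event becomes
\begin{equation*}
\{\tau_k > t,\ k \in K\} = \biguplus_{m=0}^{n-\sharp K} \Bigl\{ \tau^I(m) \leq t < \tau^I(m+1),\ d(\tau^I(p)) \cap K = \emptyset \text{ for all } p \leq m \Bigr\},
\end{equation*}
with the convention $\tau^I(0) = 0$. The upper bound $m \leq n - \sharp K$ is forced because each contagion removes at least one index from $\mathbb{I}_{m-1}$, and none of these removed indices can lie in $K$, so at most $n - \sharp K$ contagions are possible while $K$ remains intact.

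Next I would expand each summand by partitioning over the admissible sequences $(J_1, \ldots, J_m)$ with $J_p \subset \mathbb{I}_{p-1}$ and $J_p \cap K = \emptyset$ for all $p \leq m$. Rewriting $\tau^I(m) \leq t < \tau^I(m+1)$ in terms of the inter-arrival times $U_j$ as $u_1 + \cdots + u_m \leq t < u_1 + \cdots + u_{m+1}$, I can directly invoke Corollary \ref{JointInter}, whose integrand is exactly the $m$-fold product of harmonic-measure kernels appearing in the statement. Assembling these pieces yields the claimed formula, with the outer sum $\sum_{\uplus_{l=1}^{p} J_l \cap K = \emptyset}$ enforcing the survival of $K$.

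The only step that requires any care is the bookkeeping in the $m = 0$ case (where the event reduces to $\tau^I(1) > t$ and the product over $j$ is empty, giving simply the probability that no default happens before $t$) and the choice of convention $s^{I_{-1}}_{\ast} = \mathrm{id}$ so that the first factor in the product makes sense. Beyond that, the proof is a mechanical application of Corollary \ref{JointInter} exactly as in the proof of Proposition \ref{NTK}; there is no real obstacle other than making sure the combinatorial constraint $J_p \cap K = \emptyset$ is transported correctly through the decomposition.
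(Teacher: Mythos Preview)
Your proposal is correct and follows essentially the same approach as the paper: decompose the survival event $\{\tau_k>t,\ k\in K\}$ according to the number $m$ of contagions by time $t$ and the sequences $(J_1,\ldots,J_m)$ disjoint from $K$, then apply Corollary \ref{JointInter} exactly as in Proposition \ref{NTK}. The paper's own proof is even terser than yours, simply noting the decomposition and the convention $J_0=\emptyset$ for $m=0$.
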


\begin{proof}
The formula can be obtained 
in a similar way as the previous propositions 
by noting 
\begin{align*}
\begin{aligned}
&P( \tau_{k}>t, k \in K \mid X^{I}_{0}=x^{I}_{0}) \\
&= \sum_{m=0}^{n- \sharp K}
\sum_{\uplus_{l=1}^p J_l \cap K = \emptyset }
P(\tau^{I}(m)\leq t<\tau^{I}(m+1), d(\tau^{I}(p))=J_{p}, p\leq m\mid X^{I}_{0}=x^{I}_{0}),
\end{aligned}
\end{align*}
where we understand $ J_0 =\emptyset $ when $ m=0 $.
\end{proof}

\section{Models with Independence}

\subsection{The ``harmonic measure"}

As we have seen,  
the joint distributions we need is 
obtained from 
the ``harmonic measure" $ Q^I $.
In this section 
we impose independence among $ X^i $.
Then the 
it is in fact expressed in terms of 
the harmonic measures of $ X^i $ to
$ [K^i, \infty) $, $ i \in I $.

Let us be more precise. 
Let $ \tilde{X} $
be a kind of {\em Business As Usual}
process given as 
\begin{equation*}
\begin{split}
\tilde{X}^{i}_t &= x^i 
+ \int_0^t 
( \sigma_i (\tilde{X}^i_s) dW^i
+ \mu^i ( \tilde{X}_s^i) \, dt),
\end{split}
\end{equation*}
and $ \tilde{\tau}_i $
be its default time:
\begin{equation*}
\tilde{\tau}_i
:= \inf\{
s>0: \tilde{X}^{i}_s \leq K^i \}.
\end{equation*}
We assume that each of 
the distribution of 
$ (\tilde{\tau}_i, \tilde{X}^i ) $
has a density, and put
\begin{equation*}
p_j (x^i,s) = 
\frac{P ( \tilde{\tau}_i \in ds
| \tilde{X}^i= x^i)}{ds}, 
\end{equation*}
and 
\begin{equation*}
q_i (x^i,y_i,s) 
= \frac{P(\tilde{\tau}_i > s, \tilde{X}^i_s \in dy_i| \tilde{X}^i= x^i)}{dy_i},
\end{equation*}
for $ x^i, y^i \in [K^i, \infty) $
and $ s >0 $.

The ``harmonic measure'', 
the distribution of 
of $ X^I_{\tau(1)} $
can be obtained by the following 
\begin{lemma}\label{hm00}
For $ A \in \mathfrak{B} (G) $
and $ S \in \mathfrak{B} (0,\infty) $, 
\begin{equation}\label{hm01}
\begin{split}
& Q (x,A, S) 
=
\int_S \sum_i  
p_i(x^i,s) ds \int_A \delta_{K^i} (dy_i) 
\prod_{j \ne i}
q_j (x^j, y_j, s) dy_j, \\
\end{split}
\end{equation}
where $ \delta_{\ast} $ is the Dirac delta
at $ \ast $.
\end{lemma}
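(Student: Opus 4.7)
The plan is to exploit independence of the $\tilde X^i$ under the business-as-usual dynamics to reduce the multivariate harmonic measure to a sum of one-default contributions, each a product of a first-passage density for the defaulting firm and survival sub-densities for the others.

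First, I would observe that on the event that no two $\tilde\tau_i$ coincide, the first contagion time equals $\tau(1)=\min_i\tilde\tau_i$, and exactly one index $i$ achieves the minimum. On this event, $X^i_{\tau(1)-}=K^i$, while $X^j_{\tau(1)-}=\tilde X^j_{\tau(1)}$ for $j\ne i$, simply because before the first default all components coincide with the business-as-usual processes. So I would decompose
\begin{equation*}
\{X^I_{\tau(1)-}\in A,\ \tau(1)\in S\}=\biguplus_{i\in I}\{\tilde\tau_i\in S,\ \tilde\tau_j>\tilde\tau_i\ \forall j\ne i,\ (K^i,(\tilde X^j_{\tilde\tau_i})_{j\ne i})\in A\}
\end{equation*}
up to a null event (see the last paragraph).

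Next, I would compute the probability of each summand by conditioning on $\tilde\tau_i=s$ and using independence of the $\tilde X^j$. The joint law of $(\tilde\tau_i,(\tilde X^j_s)_{j\ne i}\mathbf 1_{\{\tilde\tau_j>s\}})$ factorises as $p_i(x^i,s)\,ds\otimes\prod_{j\ne i}q_j(x^j,y_j,s)\,dy_j$ by the assumed existence of the densities $p_i,q_j$ together with the independence of the components. Inserting a Dirac mass at $K^i$ for the defaulting coordinate and integrating against $\mathbf 1_A$ gives
\begin{equation*}
P(\tilde\tau_i\in ds,\,\tilde\tau_j>s\ \forall j\ne i,\,(K^i,(\tilde X^j_s)_{j\ne i})\in A)=p_i(x^i,s)\,ds\int_A\delta_{K^i}(dy_i)\prod_{j\ne i}q_j(x^j,y_j,s)\,dy_j,
\end{equation*}
and summing over $i$ and integrating $s$ over $S$ yields \eqref{hm01}.

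The only genuine issue is justifying that $P(\tilde\tau_i=\tilde\tau_j)=0$ for $i\ne j$, so that the decomposition into disjoint events is exhaustive up to a null set. This follows from the standing assumption that each $(\tilde\tau_i,\tilde X^i)$ admits a density (so $\tilde\tau_i$ is diffuse) combined with independence of $\tilde\tau_i$ and $\tilde\tau_j$ for $i\ne j$: Fubini then gives $P(\tilde\tau_i=\tilde\tau_j)=\int P(\tilde\tau_j=s)\,P(\tilde\tau_i\in ds)=0$. I expect this measure-theoretic tidying to be the only delicate point; everything else is a direct unfolding of independence and the definitions of $p_i$ and $q_i$.
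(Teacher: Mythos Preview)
Your argument is correct and follows exactly the same route as the paper: decompose by which index achieves the minimum first-passage time, insert a Dirac at $K^i$ for that coordinate, and factorise using the independence of the $\tilde X^i$. If anything, you are more careful than the paper, which omits the justification that $P(\tilde\tau_i=\tilde\tau_j)=0$ and the remark that $X^j_{\tau(1)-}=\tilde X^j_{\tau(1)}$ before the first default.
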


\begin{proof}
The left-hand-side of \eqref{hm01}
\begin{equation*}
\begin{split}
&  = \int_S \sum_i P ( \{\tilde{\tau}_i\in ds\}
\cap_{j \ne i} \{ \tilde{\tau}^i < \tilde{\tau}^j, \tilde{X}_{\tilde{\tau}_j} \in A \} ) \\
& = \int_S \sum_i \int_{A} \delta_{K^i} (dy_i)
P ( \tilde{\tau}^{i} \in ds, s< \tau^{j}, \tilde{X}_s^{j} \in dy_j,
\forall j \ne i). \\
\end{split}
\end{equation*}
By the independence of $ \tilde{X}^{i}, 
i \in I $, 
we have the desired relation \eqref{hm01}. 
\end{proof}

\ 

We put 
\begin{equation*}
g_{J}^I (x^{I \setminus J}, A,s)  :=
\int_{\prod_{i \in I \setminus J}
[K^i, \infty) \cap A }
\prod_{i \in I \setminus J} q_i (x^i,
y_i
+ \sum_{j \in J}
C_{j,i},s) \, dy_i,
\end{equation*}
and
\begin{equation*}
\begin{split}
g^{J}_I (x^{I \setminus J},s)&:= 
g^{I}_J (x^{I \setminus J}, 
\mathbf{R}^{I \setminus J},s ) \\
&= \int_{A^I_J} \prod_{i \in I \setminus J} q_i (x^i,
y_i,s) \,dy_i
\end{split}
\end{equation*}
for $ s> 0 $ and $ A \in 
\mathfrak{B} (\mathbf{R}^{I \setminus J})$.

Since here $ h^I $ 
have a density, we will write, 
with a slight abuse of the notations, 
\begin{equation*}
h^I (x^I,s) = h^I (x^I,ds)/ds \equiv P ( d^I (\tau^I (1)) = I, \tau^I (1) \in ds | X^I_0 = x^I )/ds, \quad s>0.
\end{equation*}

\begin{theorem}\label{hik}
(i) 
For a 
non-empty 
$ J \subsetneq I $,
$ S \in  \mathfrak{B} (0,\infty) $,  
and $ A \in 
\mathfrak{B} (\mathbf{R}^{I \setminus J}) $,
\begin{equation}\label{HI}
h^I_J (x=x^I,A,S) = \int_S
h^J (x^J, s) g^{I}_J (s,x^{I \setminus J},A) ds. 
\end{equation}
(ii) For $ s > 0 $, 
\begin{equation}\label{www}
\begin{split}
        & h^I (x=x^I,s) \\
        &= 
        \left( \sum_{i \in I }p_i (x^i, s) \right)
        \Big( 1+\sum_{m=1}^{\sharp I-1}(-1)^{m}\sum_{\substack{I_{m}\subsetneq ... \subsetneq I_{1} \subsetneq I_0 :=I }}
         \prod_{l=1}^m g_{I_l}^{I_{l-1} }(x^{I_{l-1} \setminus I_l }, s)
         \Big).
         \end{split}
\end{equation}
\end{theorem}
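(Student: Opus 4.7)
\textbf{Proof plan for Theorem \ref{hik}.} Both parts rest on two pieces already at hand: Theorem \ref{firstmain}, which rewrites the quantities $h^I_J$ and $h^I$ in terms of the harmonic measure $Q^I$ on the boundary of the contagion domain, and Lemma \ref{hm00}, which, thanks to the independence of the $\tilde X^i$, factors $Q^I$ as a sum over ``designated first defaulters'' $i$ of $p_i(x^i,s)\,ds$ times a Dirac in the $i$-th slot and $q_j(x^j,y_j,s)\,dy_j$ in every other slot.

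For part (i), I would start from $h^I_J(x,A,S)=Q^I(x,D^J_J\times s^I_J(A),S)$ given by Theorem \ref{firstmain}, and substitute the representation of Lemma \ref{hm00}. The Dirac mass $\delta_{K^i}(dy_i)$ forces $y_i=K^i$; since every coordinate of $s^I_J(A)$ lies strictly above $K^j+\sum_{k\in J}C_{k,j}>K^j$, only $i\in J$ can contribute. Independence then lets the integrand factor into a $J$-part and an $(I\setminus J)$-part. Summing over $i\in J$, the $J$-part is precisely $Q^J(x^J,D^J_J,ds)/ds=h^J(x^J,s)$ by Lemma \ref{hm00} applied to $J$, while the $(I\setminus J)$-part, after the change of variables $y_j=\tilde y_j+\sum_{k\in J}C_{k,j}$, reduces to the defining integral of $g^I_J(x^{I\setminus J},A,s)$.

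For part (ii), again by Theorem \ref{firstmain}, $h^I(x,ds)=Q^I(x,D^I_I,ds)$. I would then apply Lemma \ref{recf} to expand
\[
Q^I(x,D^I_I,ds)=\sum_{k=0}^{\sharp I-1}(-1)^k\sum_{\emptyset\ne J_k\subsetneq\cdots\subsetneq J_1\subsetneq I}Q^I(x,H_{I,J_1,\ldots,J_k},ds),
\]
with $H_{I,J_1,\ldots,J_k}=D^{J_k}\times A^{J_{k-1}}_{J_k}\times\cdots\times A^I_{J_1}$. Each of these product sets, under Lemma \ref{hm00} and by independence of the $\tilde X^j$, yields an integral that factorizes across its levels: the $A^{J_{l-1}}_{J_l}$ slab integrates to exactly $g^{J_{l-1}}_{J_l}(x^{J_{l-1}\setminus J_l},s)$, while the innermost $D^{J_k}$ slab (together with the Dirac constraint) contributes a ``one-default-at-time-$s$'' density of $\sum p_i$-type, the index $i$ restricted to $J_k$ since the outer slabs sit strictly above their respective $K$-levels.

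The hard step is the combinatorial bookkeeping in (ii): Lemma \ref{hm00} localizes the defaulter index $i$ to the innermost $J_k$ of each chain, whereas the target identity has a prefactor $\sum_{i\in I}p_i(x^i,s)$ uniform over chains. Carrying this out requires interchanging the sums over $i$ and over chains and recognizing that the $Q_j$-factors coming from the $D^{J_k}$-slab combine with the $g^{J_{l-1}}_{J_l}$-factors from the upper slabs to collapse into the single product $\prod_{l=1}^{m}g^{I_{l-1}}_{I_l}(x^{I_{l-1}\setminus I_l},s)$ appearing in the statement. Once this reorganization is verified, the formula follows by collecting terms and matching signs with the inclusion-exclusion of Lemma \ref{recf}; everything else is direct substitution and a shift change of variables, as in part (i).
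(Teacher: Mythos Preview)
Your approach matches the paper's for both parts: (i) via Theorem \ref{firstmain} and Lemma \ref{hm00} with the shift change of variables, and (ii) via Theorem \ref{firstmain}, the inclusion--exclusion \eqref{rec4} of Lemma \ref{recf}, and the factorization coming from independence. The paper's proof of (ii) is terser than yours---a single sentence declaring \eqref{www} a direct consequence of \eqref{WWW-N-1}, \eqref{rec4}, and the defining integral relation for $g^I_J$---and does not separately spell out the combinatorial reorganization you flag as the hard step.
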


\begin{proof}
(i) It suffices to show 
when $ A = \prod_{i \in I \setminus J} (a_i, b_i) $. 
By combining \eqref{WWW-N-0}
and Lemma \ref{hm00}, 
we see that
\begin{equation*}
\begin{split}
& h^I_J (S, \prod_{i \in I \setminus J} (a_i, b_i) ) \\
&= \int_{S} \sum_{i \in J}  
p_i(s) \int_{D^J_J
} \delta_{K^i} (dx_i) 
\prod_{j \in J \setminus \{i\} }
q_j (s, x_j) \,dx_j \\
& \qquad \times \prod_{i \in I \setminus J}
\int_{(a_i
+  \sum_{j \in J} 
C_{j,i}, b_i + \sum_{j \in J} 
C_{j,i}  ) \cap  
(K^i + \sum_{j \in J} 
C_{j,i}, \infty) }
q_i (s, x_i) dx_i \\
&= \int_S h^J (s) 
\int_{ \prod_{i \in I \setminus J}[K^i, \infty) \cap (a_i, b_i) }
\prod_{i \in I \setminus J} q_i (s, x_i -\sum_{j \in J} 
C_{j,i} ) dx ds. 
\end{split}
\end{equation*}
Thus we obtained \eqref{HI}.

(ii) The equation \eqref{www}
is a direct consequence 
of \eqref{WWW-N-1} in Theorem \ref{firstmain}
and \eqref{rec4} in Lemma \ref{recf},
together with the relation 
\begin{equation*}
g^{I}_J (s,x=x^{I \setminus J})
= \int_{A^I_J} \prod_{i \in I \setminus J} q_i (s,x^i,
y_i) \,dy_i.
\end{equation*}
\end{proof}

\bibliography{Bib_credit_risk}

\end{document}